\documentclass[aps,pra,10pt,a4paper,reprint,nofootinbib,superscriptaddress]{revtex4-2}
\usepackage[utf8]{inputenc}
\usepackage{amsfonts}
\usepackage{amssymb}
\usepackage{amsmath}
\usepackage{amsthm}
\usepackage{graphics}
\usepackage{graphicx}
\usepackage{braket}
\usepackage{color}
\usepackage[colorlinks=true,linkcolor=blue,urlcolor=blue,citecolor=blue]{hyperref}
\usepackage{times,txfonts}
\newtheorem{theorem}{Theorem}

\newtheorem{proposition}{Proposition}
\newtheorem{remark}{Remark}

\theoremstyle{definition}

\newcommand{\ketbra}[2]{|#1\rangle \langle #2|}
\definecolor{green(munsell)}{rgb}{0.0, 0.7, 0.47}

\begin{document}
\title{Entanglement of minimal dimension and a class of local state discrimination problems}

\author{Saronath Halder}
\email{saronath.halder@vitap.ac.in}
\affiliation{Department of Physics, School of Advanced Sciences, VIT-AP University, Beside AP Secretariat, Amaravati 522241, Andhra Pradesh, India}

\author{Suchetana Goswami}
\email{suchetana.goswami@gmail.com}
\affiliation{School of Informatics, University of Edinburgh, 10 Crichton Street, Newington, Edinburgh EH8 9AB, United Kingdom}

\begin{abstract}
In this work, we construct small sets of bipartite orthogonal pure states that cannot be perfectly distinguished by local operations and classical communication (LOCC). We mention that not all the states within the constructed sets are necessarily entangled. However, such a set contains at least one entangled state which cannot be conclusively identified by LOCC (with nonzero probability). Then, we show that the states of any such set can be perfectly distinguished by LOCC using a minimal-dimensional entangled resource state. Clearly, here the entangled resource state provides an advantage irrespective of the dimension of the given set. Using this result, we also prove that any pure entangled state is useful as a resource to distinguish the states of any present set unambiguously with nonzero probability under LOCC. These sets exist in all two-qudit Hilbert spaces. Furthermore, it is possible to decrease the average entanglement content of such a set or to increase the cardinality of the set without changing certain properties of the same. 
\end{abstract}
\maketitle

\section{Introduction}\label{sec1}
Quantum state discrimination \cite{Chefles00, Barnett09, Bae15} is a well-known problem in quantum information science. In this problem, a quantum system is prepared in a state which is secretly taken from a known set, the goal is to identify that state. If the states of the given set are pairwise orthogonal, then, in principle, in which state the system is prepared can always be identified perfectly. However, even if we deal with orthogonal states, the state identification may not be possible when the parts of a composite system are distributed among several spatially separated parties and the parties of those locations are allowed to perform only local operations and classical communication (LOCC) \cite{Chitambar14-1}. This is the version of the state discrimination problem that we consider here. So, if we rephrase the problem for our purpose, then it can be stated in the following manner. We consider a composite quantum system that is prepared in a state and the state is secretly taken from a given set. Now, the system is distributed between two spatially separated parties. Then, the parties of those locations perform LOCC to identify the state of the system. In our case, the given set contains only orthogonal pure states. Additionally, the states within a given set are equally probable.

There are many examples of orthogonal states that cannot be perfectly distinguished by LOCC. For example, for product states, see \cite{Bennett99-1, Bennett99, Walgate00, DiVincenzo03, Horodecki03, Niset06, Feng09, Zhang14, Chitambar14, Zhang15, Xu16, Halder18, Halder19} and for entangled states, see \cite{Ghosh01, Ghosh02, Ghosh04, Fan04, Nathanson05, Watrous05, Bandyopadhyay10, Bandyopadhyay11, Yu12, Yang13, Cosentino14, Banik21}. We mention that the state discrimination problem under LOCC (also known as local state discrimination problem) is very important direction of research as it finds application in data hiding \cite{Terhal01, Eggeling02, DiVincenzo02, Lami18, Lami21, Ha25} and in secret sharing \cite{Rahaman15}. When for a given set, the state of the system cannot learned optimally under LOCC, we say that the set is locally indistinguishable or the states are locally indistinguishable. Otherwise, the set (or the states) is (are) locally distinguishable. Usually, when we deal with orthogonal states, we seek for a strategy to distinguish the states perfectly, i.e., to identify the state of the system with certainty. Now, when perfect discrimination is not possible, one can still think about probabilistic discrimination. There are two usual approaches in this direction, (i) discrimination by minimizing the error, (ii) discrimination without committing any error. The former is called minimum-error state discrimination and the later is called the conclusive or the unambiguous state discrimination. For our purpose, the second one is required. In fact, in the second case, there is no scope of committing any error but there is a possibility of an inconclusive outcome.

In this work, we study the sets that are locally indistinguishable but entanglement of minimal dimension as a resource might be sufficient to make them distinguishable under LOCC. For entanglement-assisted local state discrimination processes, see \cite{Cohen08, Bandyopadhyay09, Bandyopadhyay16, Zhang18, Halder18, Rout19, Shi20, Zhang20, Yu14, Gungor16, Bandyopadhyay18, Lovitz22}. The motivation of studying the present sets follows from the fact that in an information processing protocol, we encode the information in quantum states and then, when it is required, there must be a way to decode the information using additional resources. Now, to make the process efficient, one can think about using a minimal-dimensional resource state in the information decoding. For example, one can have a look into the Refs.~\cite{Goswami23, Srivastav25}.

Here, we construct incomplete sets, i.e., the sets that do not form bases for the considered Hilbert spaces. Among such incomplete sets, there are some sets which contain only entangled states and exhibit local indistinguishability. For example, it is known that in a two-qudit system, any set that contains $d+1$ orthogonal maximally entangled states cannot be perfectly distinguished by LOCC ($d$ is the dimension of the individual subsystems) \cite{Ghosh04, Hayashi06}. However, we prefer not to take all states as entangled. Furthermore, if possible, we also do not want to stick to maximal entanglement for the entangled states. Additionally, if we consider less than $d+1$ cardinality then, it is not always possible to get local indistinguishability \cite{Walgate00, Nathanson05}. However, our goal is to put forward a construction which works for all two-qudit Hilbert spaces and also the average entanglement content of the set is less. 

The present contributions can be given as the following. We first construct a small set of $d+1$ states in a two-qudit system; $d$ is the dimension of the smaller subsystem. Such a set contains only orthogonal pure states, $d$ of them can be maximally entangled and one is a product state. We prove that for such a set not all states are conclusively locally identifiable. Therefore, obviously, the states of the set cannot be perfectly distinguished by LOCC. Furthermore, for local indistinguishability, it is not necessary that the entangled states must be maximally entangled. Then, we show that a two-qubit maximally entangled state as resource is sufficient to distinguish the states of any present set irrespective of the values of the dimensions of subsystems. This clearly depicts the power of a two-qubit entangled state though it is a minimum-dimensional entangled state. Using this result, we also prove that all pure entangled states are useful in distinguishing the states of any present set conclusively under LOCC. Additionally, it is shown that in some cases it is possible to increase the cardinality of a set without changing its property in the context of entanglement-assisted discrimination.

The rest of the paper is arranged in the following manner. In Sec.~\ref{sec2}, we present some preliminary concepts for understanding the main results. Then, in Sec.~\ref{sec3}, we present the results step by step. Finally, we draw the conclusion in Sec.~\ref{sec4}, mentioning some open problems.

\section{Preliminaries}\label{sec2}
We focus on a bipartite system, associated with a Hilbert space $\mathcal{H} = \mathbb{C}^d\otimes\mathbb{C}^{d^\prime}$. More precisely, we write the dimension of the bipartite system as, $d\otimes {d^\prime}$; where $d$ is the dimension of the smaller subsystem, i.e., $d^{\prime}\geq d$. If we consider a state of maximal Schmidt rank, then it can have maximum $d$ Schmidt rank in $d\otimes d^{\prime}$. Clearly, when we consider a set containing a maximal Schmidt rank state, the set belongs to both $d\otimes d$ as well as $d\otimes d^{\prime}$. For this reason, in the result section, we discuss about sets in $d\otimes d$ only ($d\geq2$). Because, for all values of $d$, if we construct sets in $d\otimes d$, then, it implies that this type of sets exist in all bipartite Hilbert spaces. We mention that the minimum number of product states required to express an entangled state is equal to the Schmidt rank of the entangled state \cite{Horodecki09}. 

When we introduce a set and discuss about the local discrimination of the states of the given set, we may not consider normalized states because normalization may not play any important role there. In $d\otimes d$, the set we consider has a cardinality $d+1$. Among $d+1$ states, $d$ states are maximally entangled states and one state is a product state. Later, we also provide sets where non-maximally entangled states are used, there providing normalization becomes important.

For perfect discrimination of states, it is necessary that the states of the given set are conclusively distinguishable. Furthermore, for conclusive distinguishability of the states of a given set, it is required to check if each state of the set is conclusively identifiable. In the following, we provide a necessary and sufficient criterion to check the conclusive identification of a state from a given set under LOCC \cite{Chefles04, Bandyopadhyay09-1}. We consider that a set of bipartite states $\{\ket{\phi_i}\}_{i=1}^n$ is given. Now, the state $\ket{\phi_i}$ can be identified conclusively if there exist a state of the form $\ket{\alpha}\ket{\beta}\equiv\ket{\alpha\beta}$, such that $\braket{\alpha\beta|\phi_i}>0$ and $\braket{\alpha\beta|\phi_j}=0$, where $j \in \{1,2,\dots, n\}$ and $j\neq i$.

We are now ready to discuss our main findings in the following section.

\section{Results}\label{sec3}
We start with a two-qubit system. In $2\otimes2$, we consider a set consisting of the following states:
\begin{equation}\label{eq1}
\ket{\psi_1} = \ket{00} + \ket{11},~ 
\ket{\psi_2} = \ket{00} - \ket{11},~ 
\ket{\psi_3} = \ket{01}.
\end{equation}
It is known that if a set contains two entangled states and a product state in a two-qubit system, then these states cannot be perfectly distinguished by LOCC \cite{Walgate02}. For this reason, the above set is locally indistinguishable. Now, since the set is a two-qubit set, a maximally entangled state is sufficient to distinguish the states via a teleportation-based protocol. 

Next, we consider a similar set for a two-qutrit system, and the states in the set are given by,
\begin{equation}\label{eq2}
\begin{array}{l}
\ket{\psi_1} = \ket{00} + \ket{11} + \ket{22},\\[0.5 ex]

\ket{\psi_2} = \ket{00} +\omega\ket{11} + \omega^2\ket{22},\\[0.5 ex]

\ket{\psi_3} = \ket{00} +\omega^2\ket{11} + \omega\ket{22},\\[0.5 ex]

\ket{\psi_4} = \ket{01},
\end{array}
\end{equation}
where $\omega$ and $\omega^2$ are cube roots of unity such that $\omega^3 = 1$ and $1+\omega+\omega^2=0$. It is known that the states $\{\ket{\psi_1},~\ket{\psi_2},~\ket{\psi_4}\}$ of the above set cannot be perfectly distinguished by LOCC. These states are discussed in the context of the phenomenon ``more nonlocality with less entanglement'' \cite{Horodecki03}. For this reason, the above set is also locally indistinguishable. However, we want to discuss the entanglement-assisted discrimination of the set given in (\ref{eq2}). In this context, we prove the following.

\begin{proposition}\label{prop1}
A two-qubit maximally entangled state is sufficient to distinguish the states of Eq. (\ref{eq2}) perfectly via LOCC.   
\end{proposition}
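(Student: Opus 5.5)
The plan is to use the two-qubit resource $\ket{\Phi}_{ab}=\ket{00}+\ket{11}$, with qubit $a$ held by Alice and qubit $b$ by Bob. With it, Bob "tags'' coherently whether his qutrit $B$ lies in $\mathrm{span}\{\ket{0}\}$ or in $\mathrm{span}\{\ket{1},\ket{2}\}$, and the tag is copied onto Alice's side through the entanglement. The product state $\ket{\psi_4}=\ket{01}$ then becomes locally separable from the rest without destroying the phases of $\ket{\psi_1},\ket{\psi_2},\ket{\psi_3}$.

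First, Bob performs on $bB$ the two-outcome measurement with Kraus operators
\[
K_0=\ket{0}\bra{0}_b\otimes P_0+\ket{1}\bra{1}_b\otimes P_{12},\qquad K_1=\ket{1}\bra{1}_b\otimes P_0+\ket{0}\bra{0}_b\otimes P_{12},
\]
where $P_0=\ket{0}\bra{0}$ and $P_{12}=\ket{1}\bra{1}+\ket{2}\bra{2}$ act on $B$. A direct check gives $K_0^\dagger K_0+K_1^\dagger K_1=\mathbb{I}$. For outcome $K_0$, writing $\ket{\psi_k}=\sum_{j=0}^{2}\omega^{(k-1)j}\ket{jj}$, the post-measurement states are
\[
\ket{0_a0_b}\ket{00}_{AB}+\ket{1_a1_b}\left(\omega^{k-1}\ket{11}+\omega^{2(k-1)}\ket{22}\right)_{AB}
\]
for $k=1,2,3$, and $\ket{1_a1_b}\ket{01}_{AB}$ for $\ket{\psi_4}$. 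Outcome $K_1$ yields the same structure with $a$ and $b$ flipped in the appropriate branch, i.e. $\ket{0_a1_b}\ket{00}+\ket{1_a0_b}(\cdots)$ and $\ket{1_a0_b}\ket{01}$. That branch is handled identically after Alice or Bob applies a local bit flip.

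Second, Alice measures $aA$ with the projectors $\Pi=\ket{0}\bra{0}_a\otimes\ket{0}\bra{0}_A+\ket{1}\bra{1}_a\otimes P_{12}$ and $\mathbb{I}-\Pi$. The reduced support of $\ket{\psi_1},\ket{\psi_2},\ket{\psi_3}$ on Alice's side lies in the range of $\Pi$. The state $\ket{\psi_4}$ carries $\ket{1_a0_A}$, which lies in the range of $\mathbb{I}-\Pi$. Hence a click of $\mathbb{I}-\Pi$ identifies $\ket{\psi_4}$ with certainty, and a click of $\Pi$ excludes it while leaving $\ket{\psi_1},\ket{\psi_2},\ket{\psi_3}$ undisturbed.

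Third, relabel Alice's effective basis $\{\ket{0_a0},\ket{1_a1},\ket{1_a2}\}$ as $\{\ket{\tilde 0},\ket{\tilde 1},\ket{\tilde 2}\}$, and likewise Bob's $\{\ket{0_b0},\ket{1_b1},\ket{1_b2}\}$. In this relabeling the three remaining states become exactly $\sum_j\omega^{(k-1)j}\ket{\tilde j\tilde j}$, which are three orthogonal generalized Bell states of an effective $3\otimes3$ system. These are perfectly distinguished by LOCC. Alice measures in the Fourier basis $\ket{f_m}\propto\sum_j\omega^{-mj}\ket{\tilde j}$ and announces $m$. Bob's conditional state is then $\propto\sum_j\omega^{(k-1+m)j}\ket{\tilde j}$, and a Fourier-basis measurement by Bob reveals $k-1+m \bmod 3$, hence $k$.

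I expect the main obstacle to be the bookkeeping in the first two steps. One must verify that $K_0,K_1$ form a valid measurement, and that in both outcome branches the relative phases $\omega^{(k-1)j}$ survive intact with the $\ket{0_A}$ and $\ket{1_A},\ket{2_A}$ sectors correctly aligned. Only then does Alice's projector $\Pi$ cleanly separate $\ket{\psi_4}$, and only then do the remaining states stay maximally entangled in the relabeled $3\otimes3$ space.
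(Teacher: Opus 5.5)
Your protocol is the paper's proof with the order of the first two measurements swapped. Your $K_0,K_1$ are exactly the projectors $\Pi_1,\Pi_2$ of Eq.~(\ref{eq4}) applied by Bob to his qutrit--qubit pair, and your $\Pi,\mathbb{I}-\Pi$ are the same projectors applied by Alice. The closing Fourier-basis measurements on the relabeled $3\otimes3$ generalized Bell states also coincide with the paper's.

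There is one bookkeeping slip, in the $K_1$ branch. $K_1$ is diagonal in $b$, and the resource forces $a=b$. That outcome therefore leaves $\ket{1_a1_b}\ket{00}+\ket{0_a0_b}(\omega^{k-1}\ket{11}+\omega^{2(k-1)}\ket{22})$ and $\ket{0_a0_b}\ket{01}$, not the $a\neq b$ expressions you wrote. As a result, a bit flip by only one party does not reproduce the $K_0$ branch.

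The fix is easy in either of two ways:
\begin{itemize}
\item both of you flip your qubits, which gives back exactly the $K_0$-branch states;
\item Alice applies the same $\{\Pi,\mathbb{I}-\Pi\}$ and reads the two outcomes with their roles exchanged, since $\ket{\psi_4}$ now lands in $\Pi$ and $\ket{\psi_1},\ket{\psi_2},\ket{\psi_3}$ in $\mathbb{I}-\Pi$.
\end{itemize}
The second option is the paper's observation that $\ket{\psi_4}$ is signalled exactly when the two parties' outcomes disagree. So the whole discrimination can be done by local operations alone, with the outcomes combined at the end and no communication needed.
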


\begin{proof}
To prove the above, we provide an explicit protocol for state discrimination. We consider the two-qubit entangled state as $\ket{\Psi} = \ket{00} + \ket{11}$. The set along with the resource state $\{\ket{\psi_1}$, $\ket{\psi_2}$, $\ket{\psi_3}$, $\ket{\psi_4}\}\otimes\ket{\Psi}$ is given as the following, 

\begin{widetext}
\begin{equation}\label{eq3}
\begin{array}{l}
\ket{00}\ket{00} + \ket{10}\ket{10} + \ket{20}\ket{20} + \ket{01}\ket{01} + \ket{11}\ket{11} + \ket{21}\ket{21}, \\[0.5 ex]

\ket{00}\ket{00} + \omega\ket{10}\ket{10} + \omega^2\ket{20}\ket{20} + \ket{01}\ket{01} + \omega\ket{11}\ket{11} + \omega^2\ket{21}\ket{21}, \\[0.5 ex]

\ket{00}\ket{00} + \omega^2\ket{10}\ket{10} + \omega\ket{20}\ket{20} + \ket{01}\ket{01} + \omega^2\ket{11}\ket{11} + \omega\ket{21}\ket{21}, \\[0.5 ex]

\ket{00}\ket{10} + \ket{01}\ket{11}.
\end{array}
\end{equation}  
\end{widetext}
The protocol proceeds the following way. First, Alice performs a two-outcome projective measurement defined by two projection operators. They are given by, \begin{equation}\label{eq4}
\begin{array}{l}
\Pi_1 = \ketbra{00}{00}+\ketbra{11}{11}+ \ketbra{21}{21},\\[0.5 ex]

\Pi_2 = \ketbra{01}{01}+\ketbra{10}{10}+ \ketbra{20}{20},
\end{array}
\end{equation}
with outcomes ``1" and ``2" respectively. If the outcome is ``1'', then Alice and Bob are left with the following states.
\begin{equation}\label{eq5}
\begin{array}{l}
\ket{00}\ket{00} + \ket{11}\ket{11} + \ket{21}\ket{21}, \\[0.5 ex]

\ket{00}\ket{00} + \omega\ket{11}\ket{11} + \omega^2\ket{21}\ket{21}, \\[0.5 ex]

\ket{00}\ket{00} + \omega^2\ket{11}\ket{11} + \omega\ket{21}\ket{21}, \\[0.5 ex]

\ket{00}\ket{10}.
\end{array}
\end{equation}
Next, Bob does the same measurement defined by the operators $\{\Pi_1,~\Pi_2\}$. If the outcome is ``2'' then the state is $\ket{\psi_4}$. Otherwise, Alice and Bob are left with the following states.
\begin{equation}\label{eq6}
\begin{array}{l}
\ket{\mathbf{0}}\ket{\mathbf{0}} + \ket{\mathbf{1}}\ket{\mathbf{1}} + \ket{\mathbf{2}}\ket{\mathbf{2}}, \\[0.5 ex]

\ket{\mathbf{0}}\ket{\mathbf{0}} + \omega\ket{\mathbf{1}}\ket{\mathbf{1}} + \omega^2\ket{\mathbf{2}}\ket{\mathbf{2}}, \\[0.5 ex]

\ket{\mathbf{0}}\ket{\mathbf{0}} + \omega^2\ket{\mathbf{1}}\ket{\mathbf{1}} + \omega\ket{\mathbf{2}}\ket{\mathbf{2}},
\end{array}
\end{equation}
where $\ket{\mathbf{0}} \equiv \ket{00}$, $\ket{\mathbf{1}} \equiv \ket{11}$, and $\ket{\mathbf{2}} \equiv \ket{21}$. Interestingly, this set can be perfectly distinguished by LOCC when Alice and Bob both perform projective measurements in the following basis: 
\begin{equation}\label{eq7}
\begin{array}{l}
\ket{\mathbf{0}} + \ket{\mathbf{1}} + \ket{\mathbf{2}}, \\[0.5 ex]

\ket{\mathbf{0}} + \omega\ket{\mathbf{1}} + \omega^2\ket{\mathbf{2}}, \\[0.5 ex]

\ket{\mathbf{0}} + \omega^2\ket{\mathbf{1}} + \omega\ket{\mathbf{2}}.
\end{array}
\end{equation}
Finally, we trace back the other outcome of the very first measurement by Alice. In that case too, similar steps will be followed. In this way, the whole protocol can be completed and it proves the above proposition.
\end{proof}

Note that, in the above protocol, Alice and Bob are performing disjoint measurements. Therefore, nowhere choosing a measurement among several measurements based on any type of classical communication is involved. Clearly, when an additional entangled state is provided as a resource, the given states can be distinguished instantaneously, i.e., under LO only, instead of using full LOCC. This is an important feature that becomes useful in several cryptographic protocols, such as entity authentication, where CC can reveal the underlying secret; yet just with LO the shared quantum state can be identified \cite{goswami25}.

Now we also discuss about increasing the cardinality, i.e., one can think about including other states within the set of Eq. (\ref{eq2}). For example, if we consider another product state $\ket{02}$ with the states of Eq. (\ref{eq2}), then, along with the resource state, the new product state becomes $\ket{00}\ket{20}+\ket{01}\ket{21}$. Once Alice performs her measurement and gets outcome ``1'', the state becomes $\ket{00}\ket{20}$. Clearly, if Bob gets outcome ``2'' after his measurement, the parties are left with two orthogonal product states which can be distinguished by some local measurements for sure \cite{Walgate00}. The rest of the protocol goes in the same way as it is described in the proof of the preceding proposition. In this way, the cardinality can be increased without altering the result related to the entanglement-assisted discrimination given in the same proposition (Proposition \ref{prop1}). Note that, if cardinality is increased then, full LOCC might be required in case of entanglement-assisted discrimination.\\

\begin{remark}\label{rem1}
From the above proposition, it is quite clear that the subset, i.e., $\{\ket{\psi_1},~\ket{\psi_2},~\ket{\psi_4}\}$ of Eq. (\ref{eq2}) which arises in the context of ``more nonlocality with less entanglement'' can be perfectly distinguished by LOCC if a two-qubit maximally entangled state is available as a resource.
\end{remark}

Now we proceed to discuss about a set in $4\otimes4$. The set is given as the following. 
\begin{equation}\label{eq8}
\begin{array}{l}
\ket{\mu_1} = \ket{00} + \ket{11} + \ket{22} + \ket{33},\\

\ket{\mu_2} = \ket{00} + \ket{11} - \ket{22} - \ket{33},\\[0.5 ex]

\ket{\mu_3} = \ket{00} - \ket{11} - \ket{22} + \ket{33},\\[0.5 ex]

\ket{\mu_4} = \ket{00} - \ket{11} + \ket{22} - \ket{33},\\[0.5 ex]

\ket{\mu_5} = \ket{01}.
\end{array}
\end{equation}
Then, regarding this set, we prove the following.

\begin{proposition}\label{prop2}
The above set in $4\otimes 4$ contains at least one state that cannot be conclusively identified by LOCC with nonzero probability. Clearly, the states cannot be perfectly distinguished by LOCC. However, these states can be perfectly distinguished by LOCC if a two-qubit maximally entangled state as a resource is available.    
\end{proposition}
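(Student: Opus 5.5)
The plan has two parts: first show that one of the entangled states admits no conclusive local identification, using the product-state criterion from Sec.~\ref{sec2}; then give an explicit protocol, modelled on the proof of Proposition~\ref{prop1}, that distinguishes the set once $\ket{\Psi}=\ket{00}+\ket{11}$ is supplied.

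\emph{Local indistinguishability.} I would show that $\ket{\mu_1}$ cannot be conclusively identified. Suppose a product state $\ket{\alpha\beta}$ satisfies the criterion, and write $c_k=\alpha_k^*\beta_k^*$. Then $\braket{\alpha\beta|\mu_j}=\sum_k s_{jk}c_k$, where $s_{jk}=\pm1$ are the sign patterns in Eq.~(\ref{eq8}). The three conditions $\braket{\alpha\beta|\mu_j}=0$ for $j=2,3,4$ are rows of a $4\times4$ Hadamard (Walsh) matrix. Their common solution space is spanned by the remaining row, so $c_0=c_1=c_2=c_3=c$. The requirement $\braket{\alpha\beta|\mu_1}=4c\neq0$ then forces $c\neq0$, hence every $\alpha_k$ and $\beta_k$ is nonzero. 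In particular $\braket{\alpha\beta|\mu_5}=\alpha_0^*\beta_1^*\neq0$, contradicting orthogonality to $\ket{\mu_5}$. So $\ket{\mu_1}$ is not conclusively identifiable, and by the symmetry of the sign patterns the same holds for every $\ket{\mu_i}$, $i=1,\dots,4$. Consequently the set is not conclusively, and hence not perfectly, distinguishable by LOCC.

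\emph{Entanglement-assisted protocol.} I would append $\ket{\Psi}$ and have each party measure the projectors
$\Pi_1=\ketbra{00}{00}+\sum_{k=1}^{3}\ketbra{k1}{k1}$ and $\Pi_2=\ketbra{01}{01}+\sum_{k=1}^{3}\ketbra{k0}{k0}$ on their (system, ancilla) pair. This is the direct extension of Eq.~(\ref{eq4}).
\begin{itemize}
\item The product state becomes $\ket{00}\ket{10}+\ket{01}\ket{11}$. If Alice obtains outcome ``1'', Bob's part is $\ket{10}$, which lies in the range of $\Pi_2$. Hence Bob's outcome ``2'' identifies $\ket{\mu_5}$, and the outcome ``2'' branch for Alice is handled symmetrically.
\item When both parties obtain ``1'', the entangled states become $\sum_k s_{jk}\ket{\mathbf{k}}\ket{\mathbf{k}}$, with $\ket{\mathbf{0}}\equiv\ket{00}$ and $\ket{\mathbf{k}}\equiv\ket{k1}$ for $k\geq1$.
\end{itemize}

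\emph{Final local measurement.} Writing $k=2x+y$, the four sign patterns factor as tensor products of $\ket{00}\pm\ket{11}$ on two effective qubits. The four states are therefore $\ket{\Phi^{\pm}}\otimes\ket{\Phi^{\pm}}$ in the effective two-qubit $\otimes$ two-qubit structure. Each party measures in the Walsh basis $\{\sum_k s_{jk}\ket{\mathbf{k}}\}$, i.e.\ $H\otimes H$ applied to the effective qubits. Comparing the two pairs of $X$-parity outcomes identifies the state, exactly as $\ket{\Phi^+}$ and $\ket{\Phi^-}$ are distinguished by both parties measuring in the $X$ basis. The other branches of the first measurement are handled identically.

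\emph{Main obstacle.} I expect the main difficulty to be this last step: checking carefully that, after the projections, the effective states really are the Walsh-signed maximally entangled states. This amounts to checking that no amplitude mixes between branches, so that the tensor-factor identification holds in every branch.
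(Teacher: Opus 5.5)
Your proof is correct, but only the entanglement-assisted half follows the paper's route; the indistinguishability half is argued differently.

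\textbf{Entanglement-assisted part.} This is the paper's protocol: the projectors of Eq.~(\ref{eq10}), then both parties measure in the basis of Eq.~(\ref{eq11}). The paper only asserts that the final measurement works. You justify it by reading the sign rows as characters $(-1)^{ax+by}$ of $\mathbb{Z}_2^2$, so the states are $\ket{\Phi^\pm}\otimes\ket{\Phi^\pm}$ and an outcome pair $(p,q)$ occurs only for $j=p\oplus q$. The obstacle you anticipate does not arise: $\Pi_1,\Pi_2$ are diagonal in the computational basis, so each branch simply keeps a subset of terms and no amplitudes mix.

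\textbf{Local indistinguishability.} The paper parametrizes the orthogonal complement of $\ket{\mu_2},\dots,\ket{\mu_5}$ as $a_1\ket{\mu_1}$ plus the eleven off-diagonal kets other than $\ket{01}$. It then notes that once $a_1\neq0$, the Bob-side vectors attached to $\ket{0}_A$ and $\ket{1}_A$ are linearly independent, so no vector in that subspace is a product state. You argue in the dual direction: you start from an arbitrary product state, and the Hadamard structure of the three orthogonality constraints forces $\alpha_k\beta_k$ to be constant and nonzero, which contradicts $\braket{\alpha\beta|\mu_5}=0$.

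Both arguments rest on the same two facts. The target state has nonzero $\ket{00}$ and $\ket{11}$ amplitudes, and $\ket{01}$ is absent from the complement. Both also carry over to Theorem~\ref{theo1} and to the set of Eq.~(\ref{eq15}). In your version, orthogonality of the diagonal states gives $\alpha_k\beta_k\propto m_k$, where $m_k$ are the target's diagonal amplitudes.

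The paper's rank check is shorter, but it leaves implicit why linear independence rules out a product state, and it treats only $\ket{\mu_1}$. Your version is self-contained and shows exactly why $\ket{\mu_5}$ is the obstruction: without it, $(\sum_k\ket{k})(\sum_k\ket{k})$ would identify $\ket{\mu_1}$. It also shows at no extra cost that none of $\ket{\mu_1},\dots,\ket{\mu_4}$ is conclusively identifiable.
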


\begin{proof}
The proof can be presented in two parts. The first part is for local indistinguishability and the second part is for entanglement-assisted discrimination. In the following, we first provide the proof of local indistinguishability. The technique is quite efficient in the sense that it can be easily generalized to higher dimensions.

We first recall that for perfect local distinguishability, it is necessary to distinguish the states probabilistically without committing any error. Clearly, if a set contains at least one state that cannot be conclusively locally identified, then such a set cannot be perfectly distinguished by LOCC. We further mention that the necessary and sufficient condition for conclusive local identification (also known as unambiguous local identification) \cite{Chefles04, Bandyopadhyay09-1} is described in Sec.~\ref{sec2}. 

Now, let us pick the state $\ket{\mu_1}$ 
and we search for the suitable product state. For conclusive local identification, this product state must have nonzero overlap with $\ket{\mu_1}$ and it should have zero overlap with the other states of the set. Such a product state must be written as superposition of the following states: $\{\ket{\mu_1}$, $\ket{02}$, $\ket{03}$, $\ket{10}$, $\ket{12}$, $\ket{13}$, $\ket{20}$, $\ket{21}$, $\ket{23}$, $\ket{30}$, $\ket{31}$, $\ket{32}\}$. Then, we consider the following:
\begin{equation}\label{eq9}
\begin{array}{l}
a_1(\ket{00} + \ket{11} + \ket{22} + \ket{33}) + a_2\ket{02} + a_3\ket{03} + a_4\ket{10} \\[0.5 ex]+ a_5\ket{12} + a_6\ket{13} + a_7\ket{20} + a_8\ket{21} + a_9\ket{23}\\[0.5 ex] + a_{10}\ket{30} + a_{11}\ket{31} + a_{12}\ket{32}  \\[1.5 ex]

=\ket{0}(a_1\ket{0}+a_2\ket{2}+a_3\ket{3}) + \ket{1}(a_4\ket{0}+a_1\ket{1}+a_5\ket{2} \\[0.5 ex] +a_6\ket{3}) + \ket{2} (a_7\ket{0}+a_8\ket{1}+a_1\ket{2}+a_9\ket{3})\\[0.5 ex] + \ket{3}(a_{10}\ket{0}+a_{11}\ket{1}+a_{12}\ket{2}+a_1\ket{3}),
\end{array}
\end{equation}    
where $a_i$ are complex coefficients such that after taking superposition, it must produce a valid quantum state. Note that, the states $a_1\ket{0}+a_2\ket{2}+a_3\ket{3}$ and $a_4\ket{0}+a_1\ket{1}+a_5\ket{2}+a_6\ket{3}$ are always linearly independent as $a_1\neq0$. Clearly, this superposition will not be able to produce the product state we are looking for. In this way, one can argue that the state $\ket{\mu_1}$ cannot be conclusively locally identified. Therefore, the set cannot be perfectly distinguished by LOCC.

We are now left with the second part. This is same as the previous proof. The first two-outcome projective measurement can be revised by the projectors,
\begin{equation}\label{eq10}
\begin{array}{c}
\Pi_1 = \ketbra{00}{00}+\ketbra{11}{11}+ \ketbra{21}{21} + \ketbra{31}{31},\\[0.5 ex]

\Pi_2 = \ketbra{01}{01}+\ketbra{10}{10}+ \ketbra{20}{20} + \ketbra{30}{30}.
\end{array}
\end{equation}
Such a measurement is performed by both the parties and in this process the product state can either be identified or be eliminated. Then, the parties are left with four entangled states which can be distinguished locally if both Alice and Bob perform the measurement in the following basis, 
\begin{equation}\label{eq11}
\begin{array}{l}
\ket{\mathbf{0}} + \ket{\mathbf{1}} + \ket{\mathbf{2}} + \ket{\mathbf{3}}, \\[0.5 ex]

\ket{\mathbf{0}} + \ket{\mathbf{1}} - \ket{\mathbf{2}} - \ket{\mathbf{3}}, \\[0.5 ex]

\ket{\mathbf{0}} - \ket{\mathbf{1}} - \ket{\mathbf{2}} + \ket{\mathbf{3}}, \\[0.5 ex]

\ket{\mathbf{0}} - \ket{\mathbf{1}} + \ket{\mathbf{2}} - \ket{\mathbf{3}},
\end{array}
\end{equation}
(with,$\ket{\mathbf{0}} \equiv \ket{00}$, $\ket{\mathbf{1}} \equiv \ket{11}$, $\ket{\mathbf{2}} \equiv \ket{21}$, and $\ket{\mathbf{3}} \equiv \ket{31}$) when Alice and Bob both get outcome ``1'' after the measurement defined by Eq. (\ref{eq10}). Otherwise, $\ket{\mathbf{0}} \equiv \ket{01}$, $\ket{\mathbf{1}} \equiv \ket{10}$, $\ket{\mathbf{2}} \equiv \ket{20}$, and $\ket{\mathbf{3}} \equiv \ket{30}$ when Alice and Bob both get outcome ``2'' after the measurement defined by Eq. (\ref{eq10}). In this way, the protocol for entanglement-assisted discrimination can be completed.
\end{proof}

We are now ready to present the general construction, i.e., the two-qudit construction ($d\geq2$). For the formation of a similar set of cardinality $d+1$, we first construct the maximally entangled states. For this purpose, we consider the state $\ket{\nu_1} = \ket{00} + \ket{11} + \cdots + \ket{d-1~~ d-1}$. Then, for other entangled states we consider $\ket{\nu_k} = \sum_{j=0}^{d-1} \left(e^{\frac{2\pi i}{d}}\right)^{k^\prime j} \ket{jj}$; where $k = 2,3,\dots, d$, $k^\prime = k-1$ and $i=\sqrt{-1}$. In this way we construct the maximally entangled states of the set, $\{\ket{\nu_1},~\ket{\nu_2},\dots,\ket{\nu_d}\}$. Next, the product state can be given as, $\ket{\nu_{d+1}} = \ket{01}$. In this way, we construct a set of cardinality $d+1$. In the following, we provide the main theorem of this work.

\begin{theorem}\label{theo1}
The set $\{\ket{\nu_1},~\ket{\nu_2},\dots,\ket{\nu_{d+1}}\}$ ($d\geq2$) contains at least one state that cannot be conclusively identified by LOCC with nonzero probability. Clearly, the states cannot be perfectly distinguished by LOCC. Furthermore, a two-qubit maximally entangled state as a resource is sufficient for the perfect discrimination of the states by LOCC.
\end{theorem}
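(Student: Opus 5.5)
The proof has two parts, following Proposition \ref{prop2}.

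\emph{Local indistinguishability.} We show that $\ket{\nu_1}$ cannot be conclusively identified, using the criterion of Sec.~\ref{sec2}. Suppose a product state $\ket{\alpha\beta}$ has nonzero overlap with $\ket{\nu_1}$ and is orthogonal to $\ket{\nu_2},\dots,\ket{\nu_{d+1}}$.

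First we pin down its form. Because the $\ket{\nu_k}$ are built from the Fourier phases $e^{2\pi i k' j/d}$, together they span exactly $\mathrm{span}\{\ket{jj}\}_{j=0}^{d-1}$. The part of $\ket{\alpha\beta}$ in that span is orthogonal to $\ket{\nu_2},\dots,\ket{\nu_d}$, so it must be proportional to $\ket{\nu_1}$. Orthogonality to $\ket{\nu_{d+1}}=\ket{01}$ kills the $\ket{01}$ component. Hence
\[
\ket{\alpha\beta}= a_1\ket{\nu_1}+\sum_{j\neq l,\,(j,l)\neq(0,1)} a_{jl}\ket{jl}, \qquad a_1\neq 0 .
\]

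Next we view this as a $d\times d$ coefficient matrix $M=a_1 I+N$, where $N$ vanishes on the diagonal and at entry $(0,1)$. A product state corresponds to $\mathrm{rank}\,M=1$, so we aim for a contradiction with that. The first two rows are
\[
r_0=(a_1,0,a_{02},\dots),\qquad r_1=(a_{10},a_1,a_{12},\dots).
\]
These are linearly independent: if $r_0=c\,r_1$, comparing column $1$ gives $0=c\,a_1$, so $c=0$, which forces $r_0=0$, contradicting $a_1\neq0$. So $\mathrm{rank}\,M\ge 2$, and no valid product state exists.

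For $d=2$ the same computation gives $M=\begin{pmatrix}a_1&0\\a_{10}&a_1\end{pmatrix}$, of rank $2$. So the argument is uniform in $d\ge 2$. Failure of conclusive identification of $\ket{\nu_1}$ rules out conclusive, and hence perfect, LOCC discrimination.

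\emph{Entanglement-assisted discrimination.} Take $\ket{\Psi}=\ket{00}+\ket{11}$ and generalize Eq.~(\ref{eq10}):
\[
\Pi_1=\ketbra{00}{00}+\sum_{j=1}^{d-1}\ketbra{j1}{j1},\qquad \Pi_2=\ketbra{01}{01}+\sum_{j=1}^{d-1}\ketbra{j0}{j0},
\]
acting on each party's system–ancilla pair. We check the four outcome combinations.
\begin{itemize}
\item The product state becomes $\ket{00}\ket{10}+\ket{01}\ket{11}$. It only ever yields opposite outcomes for Alice and Bob, (1,2) or (2,1), so any such outcome identifies $\ket{\nu_{d+1}}$.
\item The entangled states $\ket{\nu_k}\otimes\ket{\Psi}=\sum_{j,m}\omega^{k'j}\ket{jm}\ket{jm}$ only yield equal outcomes. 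On (1,1) they project onto $\sum_j \omega^{k'j}\ket{\mathbf{j}}\ket{\mathbf{j}}$ with $\ket{\mathbf{0}}=\ket{00}$ and $\ket{\mathbf{j}}=\ket{j1}$ for $j\ge1$. On (2,2) the same holds with $\ket{\mathbf{0}}=\ket{01}$ and $\ket{\mathbf{j}}=\ket{j0}$.
\end{itemize}
In either case this is the standard generalized Bell family $\sum_j\omega^{k'j}\ket{\mathbf{jj}}$, which both parties distinguish by measuring in the Fourier basis $\sum_j\omega^{-sj}\ket{\mathbf{j}}/\sqrt d$. The overlap is nonzero only when the phase sum $\sum_j\omega^{(k'-s-t)j}$ survives, i.e. when $s+t\equiv k' \pmod d$, so $k$ is determined.

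The step that needs the most care is the indistinguishability argument. The essential point is that the zero at position $(0,1)$ coexists with the nonzero diagonal $a_1$, and that the Fourier structure forces the diagonal to be exactly constant. The protocol part is a routine generalization of Propositions \ref{prop1} and \ref{prop2}.
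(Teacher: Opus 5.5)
Your proof is correct and follows the paper's route. You use the same argument that rows $0$ and $1$ of the coefficient matrix are linearly independent (so the rank is at least $2$) to show $\ket{\nu_1}$ is not conclusively identifiable, as in Proposition~\ref{prop2}; you then use the same $\Pi_1,\Pi_2$ measurement followed by a Fourier-basis measurement, and your Fourier-span justification of the candidate product state's form and your check of the outcome correlations supply steps the paper only asserts. One trivial slip: with your basis $\sum_j\omega^{-sj}\ket{\mathbf{j}}$ the surviving condition is $s+t\equiv -k' \pmod d$ rather than $s+t\equiv k'$, which still determines $k$.
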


\begin{proof}
We first talk about the proof of local indistinguishability. We use the same proof technique as given in the above and it is possible to show that the state $\ket{\nu_1}$ cannot be conclusively identified by LOCC. The reason is the same. If we consider the product states from the rest of the dimension of the composite Hilbert space then, taking superposition of the state $\ket{\nu_1}$ and those product states, it is not possible to produce another product state. In this way, we can prove that the state $\ket{\nu_1}$ cannot be conclusively identified by LOCC. Thus, the set cannot be perfectly distinguished by LOCC as well. 

Then, we talk about the entanglement-assisted discrimination. To generalize the protocol, we first generalize the following measurement which is a two-outcome projective measurement.
\begin{equation}\label{eq12}
\begin{array}{l}
\Pi_1 = \ketbra{00}{00}+\ketbra{11}{11}+\cdots+ \ketbra{d-1~~1}{d-1~~1},\\[0.5 ex]

\Pi_2 = \ketbra{01}{01}+\ketbra{10}{10}+\cdots+\ketbra{d-1~~0}{d-1~~0}.
\end{array}
\end{equation}
Such a measurement is performed by both the parties and in this process the product state can either be identified or eliminated. The parties are then left with $d$ entangled state which can be distinguished locally if both Alice and Bob do the measurement in the following basis. 
\begin{equation}\label{eq13}
\begin{array}{l}
\ket{\xi_1} = \ket{\mathbf{0}} + \ket{\mathbf{1}} + \cdots  + \ket{\mathbf{d-1}}, \\[0.5 ex]

\ket{\xi_k} = \sum_{j=\mathbf{0}}^{\mathbf{d-1}} \left(e^{\frac{2\pi i}{d}}\right)^{k^\prime j} \ket{j}.
\end{array}
\end{equation}
where $k = 2,3,\dots, d$, $k^\prime = k-1$ and $i=\sqrt{-1}$. The definitions of $\mathbf{0}$, $\mathbf{1}$, \dots, $\mathbf{d-1}$ depend on the measurement outcomes of Alice and Bob when they do the measurements defined by the operators of (\ref{eq12}). In this way, the protocol for entanglement-assisted discrimination can be completed.
\end{proof}

\begin{remark}\label{rem2}
It is important to notice that though the dimensions of the subsystems are increasing. The entanglement-assisted discrimination can still be done by using a minimum-dimensional entangled state. Thus, we say that this discrimination process is an efficient one and here the resource state provides an advantage irrespective of the dimensions of the subsystems. 
\end{remark}
Since, the states of any present set cannot be conclusively locally distinguishable, it is also important to ask how much entanglement will be sufficient for the discrimination of the states conclusively with some nonzero probability under LOCC. In this context, we prove the following theorem.

\begin{theorem}\label{theo2}
All pure entangled states are useful to distinguish the states of any present set conclusively under LOCC.
\end{theorem}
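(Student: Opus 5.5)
The plan is to reduce Theorem \ref{theo2} to Theorem \ref{theo1}. The key fact is that any pure entangled state can be converted into a two-qubit maximally entangled state by LOCC with nonzero probability. Combined with the deterministic protocol of Theorem \ref{theo1}, this gives conclusive discrimination with nonzero probability.

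\textbf{Step 1 (local conversion).} Take an arbitrary pure entangled resource $\ket{\Phi}$ in some $m\otimes m'$ space. Write it in Schmidt form,
$$\ket{\Phi}=\sum_{j=0}^{r-1}\sqrt{\lambda_j}\,\ket{e_j}\ket{f_j},$$
with Schmidt rank $r\geq2$ and $\lambda_0,\lambda_1>0$.
\begin{itemize}
\item Alice first performs the two-outcome projective measurement $\{P,\mathbb{I}-P\}$, where $P=\ketbra{e_0}{e_0}+\ketbra{e_1}{e_1}$. With probability $\lambda_0+\lambda_1>0$ she obtains outcome $P$. The post-measurement state is $\sqrt{\lambda_0}\ket{e_0f_0}+\sqrt{\lambda_1}\ket{e_1f_1}$, up to normalization.
\item Alice then applies the local filter $K=\sqrt{\lambda_1}\ketbra{e_0}{e_0}+\sqrt{\lambda_0}\ketbra{e_1}{e_1}$, normalized by dividing by $\max(\sqrt{\lambda_0},\sqrt{\lambda_1})$ so that $K^\dagger K\leq\mathbb{I}$. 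This is one Kraus operator of a valid local measurement. Its successful outcome produces $\ket{e_0f_0}+\ket{e_1f_1}$, again up to normalization, with nonzero probability.
\item Finally, local unitaries map $\ket{e_0}\to\ket{0}$, $\ket{e_1}\to\ket{1}$ on Alice's side, and similarly $\ket{f_0},\ket{f_1}$ on Bob's side. This yields $\ket{\Psi}=\ket{00}+\ket{11}$, the resource used in Proposition \ref{prop1} and Theorem \ref{theo1}.
\end{itemize}
All of these operations act only on the resource systems, so the unknown state $\ket{\nu_i}$ is left untouched. Equivalently, this is the Vidal/Procrustean probabilistic conversion of any Schmidt-rank-$\geq2$ state into a Bell state.

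\textbf{Step 2 (apply Theorem \ref{theo1}).} Conditioned on the successful branches of Step 1, which the parties learn through classical communication, they share $\ket{\nu_i}\otimes\ket{\Psi}$. They then run the two-outcome measurement of Eq. (\ref{eq12}) on both sides, followed by the Fourier-basis measurements of Eq. (\ref{eq13}). By Theorem \ref{theo1}, this identifies $i$ with certainty. If Step 1 fails, the parties declare an inconclusive outcome. They therefore never make an error, and each state is identified with probability at least $p_{\rm succ}>0$. Here $p_{\rm succ}$ is the overall Step 1 success probability, $p_{\rm succ}=2\min(\lambda_0,\lambda_1)$ for this filtering route, and it does not depend on $i$. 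This is exactly conclusive (unambiguous) discrimination of the whole set with nonzero probability.

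\textbf{Step 3 (usefulness).} By the first part of Theorem \ref{theo1}, without any resource the state $\ket{\nu_1}$ cannot be conclusively identified by LOCC. So conclusive discrimination of the set is impossible without entanglement. The fact that any pure entangled state enables it shows the resource is genuinely useful, which completes the argument.

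\textbf{Main obstacle.} The only delicate point is making sure the filtering operation is a legitimate local operation and that its failure branch carries no misleading information. This holds because the filter acts only on the resource, and success is heralded. I would write out the Kraus operators explicitly to settle this.
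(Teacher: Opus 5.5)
Your proposal is correct and takes the same route as the paper. The paper likewise converts the pure entangled resource into a two-qubit maximally entangled state with nonzero probability by LOCC (it simply cites Vidal's result), runs the perfect protocol of Theorem~\ref{theo1} on success, and declares an inconclusive outcome otherwise. Your explicit Procrustean filter (success probability $2\min(\lambda_0,\lambda_1)$) and your Step 3, which uses the first part of Theorem~\ref{theo1} to show conclusive discrimination is impossible without entanglement, are correct elaborations of points the paper leaves implicit.
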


\begin{proof}
Any pure entangled state can be converted to a two-qubit maximally entangled state with nonzero probability under LOCC \cite{Vidal99}. If the conversion is not successful then the process may result in a separable state. If the conversion is successful then, by Theorem \ref{theo1}, it is possible to distinguish the states perfectly. Otherwise, the states can be left without distinguishing. In that case, possibility of inconclusive outcome occurs. Now, perfect discrimination with some nonzero probability means conclusive identification with some nonzero probability. In this way, it is proved that all pure entangled states are useful to distinguish the states of any present set conclusively under LOCC.
\end{proof}

\subsection*{Other constructions}
Here we mention that though the entangled states that we consider for constructing present sets, are maximally entangled, it is not necessary to stick to such entangled states for the present results. Such an example can be given as the following. First, we consider four states:
\begin{equation}\label{eq14}
\begin{array}{l}
\ket{\phi_1} = a\ket{00}+b\ket{11},~\ket{\phi_2} = b\ket{00}-a\ket{11},\\[0.5 ex]

\ket{\phi_3} = c\ket{22}+d\ket{33},~ \ket{\phi_4} = d\ket{22}-c\ket{33},\\[0.5 ex]
\end{array}
\end{equation}
where $a,b,c,d$ are real numbers such that $0<$ $a$, $b$, $c$, $d<1$, $a^2+b^2=1=c^2+d^2$. Then, we consider the following entangled states:
\begin{equation}\label{eq15}
\begin{array}{l}
\ket{\alpha_1} = e\ket{\phi_1} + f\ket{\phi_3},~ \ket{\alpha_2} = f\ket{\phi_1} - e\ket{\phi_3},\\[0.5 ex]

\ket{\alpha_3} = g\ket{\phi_2} + h\ket{\phi_4},~ \ket{\alpha_4} = h\ket{\phi_2} - g\ket{\phi_4},
\end{array}
\end{equation}
where $e,f,g,h$ are also real numbers such that $0<$ $e$, $f$, $g$, $h<1$, $e^2+f^2=1=g^2+h^2$. Next, we consider $\ket{\alpha_5} = \ket{01}$. So, the set $\{\ket{\alpha_i}\}_{i=1}^5$ is also locally indistinguishable and a two-qubit maximally entangled state is sufficient to distinguish them by LOCC perfectly.

The local indistinguishability follows from the similar fact that superposing the state $\ket{\alpha_1}$ with the product states from the orthogonal subspace, it is not possible to produce a product state. However, for entanglement-assisted discrimination, the protocol should be revised. The first measurement remains the same.
\begin{equation}\label{eq16}
\begin{array}{l}
\Pi_1 = \ketbra{00}{00}+\ketbra{11}{11}+ \ketbra{21}{21} + \ketbra{31}{31},\\[0.5 ex]

\Pi_2 = \ketbra{01}{01}+\ketbra{10}{10}+ \ketbra{20}{20} + \ketbra{30}{30}.
\end{array}
\end{equation}
This measurement is performed by both Alice and Bob. Then, $\ket{\alpha_5}$ is either identified or eliminated. Further, Alice does a measurement in the following basis:
\begin{equation}\label{eq17}
\begin{array}{l}
\ket{\mathbf{0}} + \ket{\mathbf{1}} + \ket{\mathbf{2}} + \ket{\mathbf{3}}, \\[0.5 ex]

\ket{\mathbf{0}} + \ket{\mathbf{1}} - \ket{\mathbf{2}} - \ket{\mathbf{3}}, \\[0.5 ex]

\ket{\mathbf{0}} - \ket{\mathbf{1}} - \ket{\mathbf{2}} + \ket{\mathbf{3}}, \\[0.5 ex]

\ket{\mathbf{0}} - \ket{\mathbf{1}} + \ket{\mathbf{2}} - \ket{\mathbf{3}},
\end{array}
\end{equation}
where the definitions of $\ket{\mathbf{0}}$, $\ket{\mathbf{1}}$, $\ket{\mathbf{2}}$ and $\ket{\mathbf{3}}$ again depend on the previous measurement outcome. After performing this measurement, corresponding to each outcome a set of orthogonal states is produced on Bob's side. These states can be distinguished locally.

\section{Conclusion and open problems}\label{sec4}
We have constructed small sets of $d+1$ states in all two-qudit systems; $d$ is the dimension of the smaller subsystem. Such a set contains only orthogonal pure states, $d$ of them are entangled and one is a product state. We have proved that for such a set not all states are conclusively locally identifiable. Therefore, obviously, the states of the set cannot be perfectly distinguished by LOCC. Then, we have shown that a two-qubit maximally entangled state as a resource is sufficient to distinguish the states of any present set under LOCC, irrespective of the dimensions of the subsystems. This also implies that all pure entangled states are useful in distinguishing the states of any present set conclusively (with nonzero probability) under LOCC. Furthermore, based on our results, one can also think about increasing the cardinality of the sets for understanding the role of classical communication in the discrimination process. 

For further research, one can think about working on the necessary conditions in the entanglement-assisted local state discrimination setting for the present sets.

\section*{Acknowledgment}
S.G. would like to acknowledge the support from the EPSRC-funded Quantum Advantage Pathfinder (QAP) project, grant reference EP/X026167/1.

\bibliography{ref1}

\end{document}